\newcolumntype{M}[1]{>{\centering\arraybackslash}m{#1}}
\newcolumntype{L}[1]{>{\raggedleft\arraybackslash}m{#1}}
\newcolumntype{R}[1]{>{\raggedright\arraybackslash}m{#1}}
\newcolumntype{N}{@{}m{0pt}@{}}
\newsavebox{\measurebox}
\newcommand{\RDot}{\dot{R}_{A B}}
\newcommand{\RDotNHat}{\dot{\point{R}}_{A B \nhat}}
\newcommand{\RDotPerp}{\dot{\point{R}}_{A B \perp}}
\newtheorem{thm}{Theorem}
\title{\LARGE \bf A pressure field model for fast, robust approximation of net contact force and moment between nominally rigid objects}
\author{Ryan Elandt$^{1,2}$, Evan Drumwright$^1$, Michael Sherman$^1$, Andy Ruina$^2$% <-this % stops a space
\thanks{$^1$Toyota Research Institute, Los Altos, CA 94022}
\thanks{$^2$Cornell University, Mechanical  Engineering, Ithaca, NY 14850}
}
\begin{document}

\maketitle
\pagestyle{empty} %putting 'empty' here gets rid of page numbering

\noindent
\begin{abstract}
We introduce Pressure Field Contact (PFC), an approximate model for predicting the contact surface, pressure distribution, and net contact wrench between nominally rigid objects. 
PFC combines and generalizes two ideas: a bed of springs (an `elastic foundation') and hydrostatic pressure. Continuous pressure fields are computed offline for the interior of each nominally rigid object.
Unlike hydrostatics or elastic foundations, the pressure fields need not satisfy mechanical equilibrium conditions.
When two objects nominally overlap, a contact surface is defined where the two pressure fields are equal.
This static pressure is supplemented with a dissipative rate-dependent pressure and friction to determine tractions on the contact surface.
The contact wrench between pairs of objects is an integral of traction contributions over this surface.
PFC evaluates much faster than elasticity-theory models, while showing the essential trends of force, moment, and stiffness increase with contact load.
It yields continuous wrenches even for non-convex objects and coarse meshes.
The method shows promise as sufficiently fast, accurate, physical, and robust for robotics applications including motion and tactile sensor simulation, controller learning and synthesis, state estimation, and design-in-simulation. 
\end{abstract}

\section{INTRODUCTION}
For much of robotics, simulations based on motions of rigid objects are likely accurate enough.
A main simulation weakness, however, is the modeling of the forces and deformations of contact between these objects.
Contact models tend to have at least one of these problems: slow computation, non-physical artifacts, severe inaccuracies, or applicability only to very simple geometry.
Our primary goal with this work is improving simulations through robust contact modeling that effectively balances physical accuracy and running time, toward developing and testing robot controllers.

\par
\textbf{Rigid objects.}
When objects are sufficiently stiff, motion of material points due to deformation is small compared to average translation and rotation.
For such motions, energy, force, and momentum equations are well approximated by those of rigid objects (`rigid bodies').
Hence, the centuries-long tradition of modeling many systems --- especially mechanisms made largely of hard materials like metal, stone, wood, and bone --- using `rigid-object' mechanics.
Errors from this assumption are typically on the order of the strains, well under 1\%.

\par
\textbf{Contact forces.} Motion comes from forces, some of which are generated when objects touch.
Forces from assumed-to-be rigid and workless contacts can be algebraically eliminated from the dynamics or, if needed, solved for as part of the dynamics.
This `motion constraint'~\cite{featherstone:2008} approach to contact forces, neglecting contact deformation, is usually considered accurate enough for, \eg lubricated pin and sliding joints.

\par
\textbf{Contact deformation can't always be neglected.}
In fact, however, contact forces are resultants of stresses (or tractions) caused by deformation (strain) in the otherwise nominally-rigid objects \cite{Chatterjee:1998a}.
In some situations, the formulation of contact forces as motion constraints is not sufficiently accurate.  
Situations where some treatment of deformation is needed include: configurations with indeterminate contact forces (\eg the support forces on the feet of a four-legged chair); frictional contacts, where the resistance to rotation about the nominal contact normal depends on the deformation-dependent area of contact (\eg resistance to rotation in a pinched grip); and simultaneous collisions (for which algebraic collision laws are arbitrary or inaccurate \cite{Chatterjee:1999}).

\par
\textbf{Rigid objects with contact deformation.}
Rigid-object models are commonly used because rigid-object computations are \textit{so much} faster than, say, models where objects are modeled using elasticity, and because for many problems, the loss of accuracy by neglecting bulk deformation is small compared to other modeling errors (\eg in mass distribution, geometry, or material properties).
% Because rigid-object computation is \textit{so much} faster than, say, models where objects are modeled using elasticity, and because for many problems, the loss of accuracy by neglecting bulk deformation is small compared to other modeling errors (\eg in mass distribution, geometry, or material properties), the rigid-object model is commonly used. 
Then, locally violating the rigidity assumption, contact forces (normal forces, friction, and collisional impulses) are found using simplified deformation models including \1 point contact with collisional restitution and friction, \2 discrete springs/dampers, \3 Hertz contact, and \4 finite-elements models of contact regions \cite{Chatterjee:1998a}.

\par
State of the art contact approaches generally lie at the ends of the speed/accuracy spectrum.
Rigid point-contact based approaches are fast, but are non-smooth, hard to apply to objects with arbitrary geometries, and inefficient at computing all but coarse solutions. And, point contact models miss area-dependent phenomena especially the net contact moment, \eg scrubbing torques and rolling (or tipping) resistance.
At the other extreme, finite element (FEM) deformation models can be precise (perhaps inappropriately precise if geometry is uncertain) but generally require orders of magnitude more time to simulate than rigid-contact methods.
Both extremes have their place: simplicity is needed to simulate thousands of particles (\eg granular flow), while fidelity is needed to understand single contacts in detail (\eg car tire modeling).
Often, neither extreme is both sufficiently fast and sufficiently accurate for scenarios that include collisions, rolling resistance, and sliding (or resistance to sliding) (\eg \cite{Brogliato:2002}, p. 107).

\par
\textbf{Key desired contact-model features.}
Our focus is situations (\eg manipulation and locomotion) where these contact area-related torques can be important. We seek a contact model that captures area-dependent phenomena without the computational cost of FEM.

\par
For effective learning and optimization we need continuous gradients, meaning a contact law must predict forces that are continuous functions of state. 
For computational expedience, we want to avoid solving complementarity problems (as explained in \eg \cite{featherstone:2008}).
Similarly, we want to minimize the number of extra state variables associated with a contact patch (\eg FEM approaches  parameterize contact deformation region using many state deformation variables).
For general usage in robotics, the approach must work for arbitrary shapes (curved or with corners, convex, or non-convex) and produce physically-reasonable results for coarsely-represented geometry.

%%%%%%%%%%%%%%%%%%%%%%%%%% BACKGROUND %%%%%%%%%%%%%%%%%%%%%%%%%%
\section{Background}
Before describing the PFC model, we further review some alternative approaches.

\subsection{Dynamic, PDE-based wave-propagation models} \label{sec:pde_lcp}
Dynamic FEM approaches approximate the continuum dynamics solution of the full deformation field partial differential equations (PDE), including contact region stresses (and thus forces).
FEM uses multidimensional piecewise polynomials (\ie isoparametric elements) to model spatial variation in material state (position and velocity throughout the material).
FEM converges (in principle) to the exact PDE solution for sufficiently fine meshes and small time-steps.
The grid size must be smaller than the feature size and the time-step must be smaller than the wave transit time to resolve elastic waves.
A high-fidelity elastodynamic calculation is typically inappropriate for robotics-like simulations because internal vibrations are typically only minimally excited.
Quasi-static models provide similarly useful information with much less computation when robots are truly soft, the application targeted in~\cite{Duriez:2017}.

\subsection{Quasi-static, only-local deformation models}

As Hertz noted in the context of elastic objects whose convex shapes were approximated as conic sections \cite{Hertz:1895}, it is often reasonable to assume that the deforming contact region is small and that the object is rigid outside this deformable region.
Thus, the characteristic time in solutions is associated with the whole-object mass interacting with a contact spring whose oscillation period is much longer than the wave transit time across the contact region.
This simplification, to a rigid object with contact forces mediated by a quasi-static contact region, can be generalized to arbitrary contact shapes and to non-linear deformation laws even if Hertz contact is not appropriate. However, elastodynamics cannot be neglected for collisions of high aspect ratios objects such as rods and shells, in which wave transit times are comparable to contact durations. PFC does not attempt capture these phenomena either.

\subsection{Overlap volume methods}
These methods use overlap volume or submerged area to compute configuration-dependent normal forces.
They \1 determine a contact patch and average normal for an overlap region and \2 apply force in this direction.
Methods include \cite{Hasegawa:2003,Luo:2006}, which assume that vector area defined by the boundary of the overlap region determines patch direction; \cite{Hasegawa:2003} uses a `reverse spring foundation' for distributed force; \cite{Luo:2006} projects 3D overlap volume onto the patch and applies a lumped force at the patch centroid proportional to its area; \cite{Gonthier:2005}, which uses the overlap volume's inertia to determine a patch normal and applies a force at the overlap volume's centroid; and \cite{Wakisaka:2017}, which combines a volumetric intersection computation with kinematic constraints. 
These methods rely on a single representative patch normal, so they do not extend well to non-convex contact (\eg a peg in a hole).
Related to the overlap volume approach, are hydrostatic and elastic foundation approaches, discussed further below.

\subsection{Gaming/animation physics approaches to contact}
Most current gaming and animation physics approaches use point-contact constraint forces in the context of \emph{velocity-stepping}~\cite{Brogliato:2002}.
Some libraries targeted to engineering applications (\eg Algoryx, MuJoCo) also use this strategy.
Velocity-stepping approaches are first-order accurate~\cite{Anitescu:2004a} \emph{at best}.
These impulsive approaches provide neither error estimates nor error control.
That said, the approach of~\cite{Guendelman:2003}, while requiring considerable hand tuning to avoid context-specific artifacts, can simulate thousands of non-convex contacting objects.

\section{The pressure field model} \label{section:approach}

The PFC model attempts to capture general trends associated with continuum mechanics calculations, including that stiffness and contact area increase with load, but with a  simpler and faster calculation.

\par
At a high level, the model concerns the non-dissipative repulsion of two nominally rigid objects.

\par\noindent
\textbf{\1 A single net wrench} is calculated for each pair of contacting nominally-rigid objects, even if, for non-convex objects, the contacting region is not connected. 

\par\noindent
\textbf{\2 Pressure field.} Each object uses an immutable object-fixed `virtual pressure field' $p_0$.  
This is not a real pressure, and need not satisfy any mechanical equilibrium conditions (\eg hydrostatics).
This scalar field is zero on the surface and increases with depth.
The pressure field can be generated to increase linearly with depth, or could increase more quickly if bounded interpenetration is desired.  For purely rigid objects, pressure fields jump from  zero  to infinity  over  an  arbitrarily  small  distance  from  the  boundary. 
Finally, the `pressure' field may better be considered a \textit{potential} pressure field, in that it only manifests as an actual pressure in a particular context (discussed below).

\begin{figure}[t]
\centering
	\includegraphics[width=0.5\textwidth]{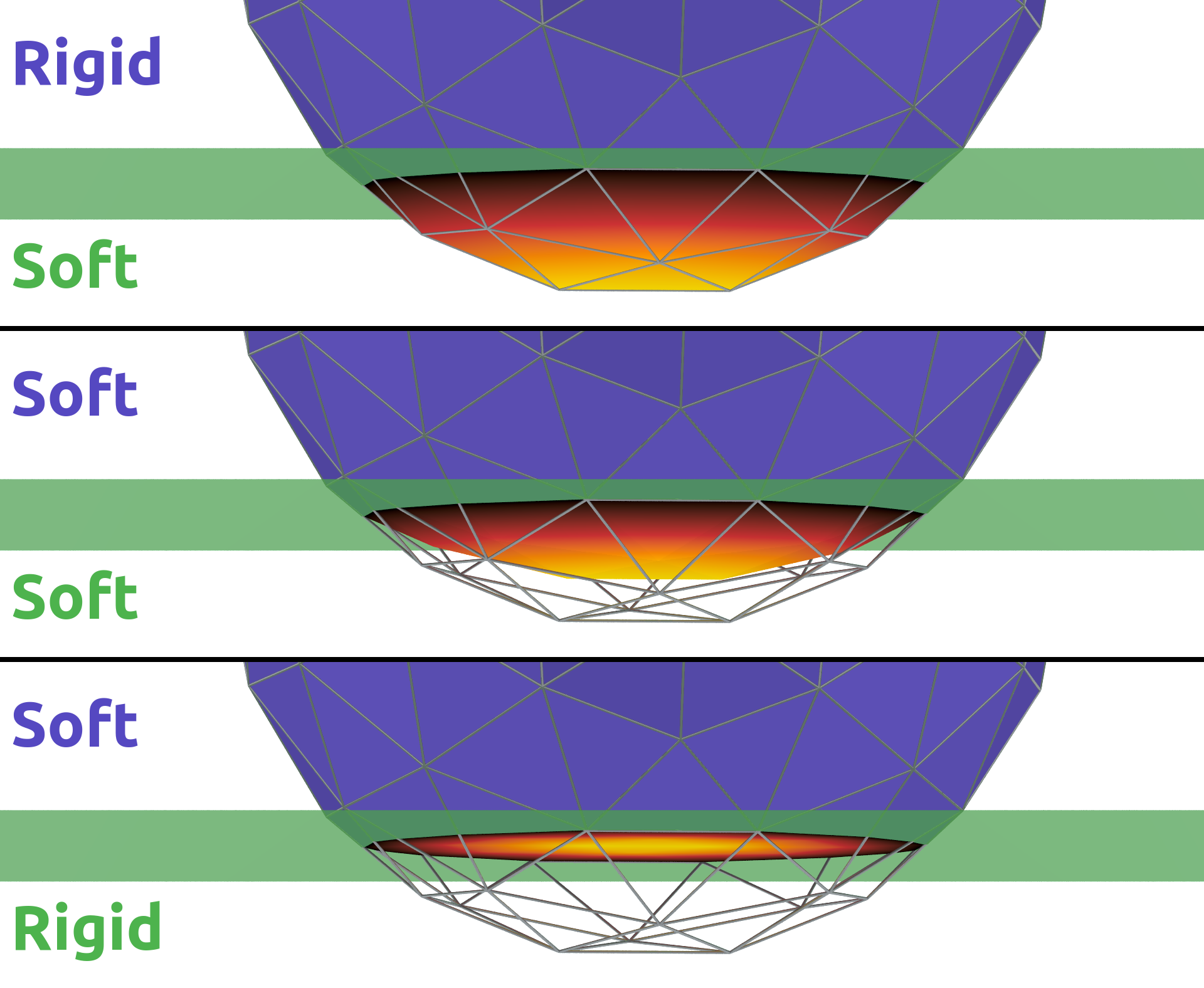}
\caption{\footnotesize\small The contact surfaces (the locus of points at which the pressure field values are equal for both objects) between a sphere and a halfspace for three relative compliances. Lighter areas on the contact surface indicate higher pressure. Neither surface needs to be planar or convex. }
\label{fig:contact-patch-with-heat-map}
\end{figure}

\par\noindent
\textbf{\3 Contact surface.}
Two overlapping objects $A$ and $B$ have pressure fields $p_{0_A}$ and $p_{0_B}$.
% what interact 
% The interaction force-couple is determined by a calculation based on the two pressure fields $p_{0_A}$ and $p_{0_B}$ (one for each overlapping object, $A$ \& $B$) in the overlapping region as follows.
The equal-pressure surface ${\mathcal S}^\cap$ (see an example in Figure~\ref{fig:contact-patch-with-heat-map})
is defined by $p_{0_A}=p_{0_B}$ in the overlap region.
% In the region that the objects overlap find the equal-pressure surface ${\mathcal S}^\cap$ defined by $p_{0_A}=p_{0_B}$.
This surface is not necessarily connected for non-convex shapes (\ie the surface might be two or more disconnected surfaces).
% For non-convex shapes, this surface need not be simply connected.
% We use the pressure field and a quasi-static assumption to define the contact surface: the isosurface where the static pressures of overlapping objects are equal.
% This equilibrium condition is illustrated for a two-dimensional example in Figure~\ref{fig:two-compressed-springs}.
This iso-surface is straightforward to compute if one judiciously selects the spatial data structures as we  do in~\S\ref{section:implementation}. 
% We will see in~\S\ref{section:implementation} that this isosurface is straightforward to compute if one judiciously selects the spatial data structures for the static pressure fields. 

\par\noindent
\textbf{\4 Net contact wrench.}
In the pressure field contact model, \textit{the net contact wrench is calculated as the integral of pressure effects over the contact surface.}
These wrenches tend  to separate overlapping bodies, as shown in Figure~\ref{fig:iso_laplace}, where a compliant object pushes against a rigid rectangular object (and vice versa).

\par\noindent
\textbf{\5 Dissipation and friction.} The model is extended in~\S\ref{section:discussion} to include friction, through shear tractions on the contact surface, and bulk dissipation, through inclusion of a normal traction varying with deformation rate. 

\par \noindent
\textbf{\6 Existence of a strain energy.}
A key feature of the pressure-field contact model is that it is conservative. That is, a contact interaction has an elastic potential energy. Through the course of contact gain and loss, the potential energy goes from zero to zero and the pressure forces do no net work on the pair of objects. The pressure field model is conservative for any scalar pressure field;  the pressure field need not satisfy any mechanics nor equilibrium conditions. 
The strain energy $U_{p_{0_A}}$ associated with $V^{\cap}_A$ is (see~\S\ref{s:pot_energy_proof}): 

\begin{equation}\label{e:int_P_dV}
U_{p_{0_A}} = \int_{V^\cap_A} p_{0_A}(x,y,z) ~dx ~dy ~dz
\end{equation}

where the volume of $A$ displaced is $V^{\cap}_A$.
The strain energy associated with the mutual displacement of the pressure fields in $A$ and $B$ is $U_{p_{0_A}} + U_{p_{0_B}}$.

\section{Relation to other models}
The pressure-field contact model is most reminiscent of elastic foundation, hydrostatic, and volume overlap models.
It shares with all of them that the contact wrench, at least before friction and rate effects are taken into account, is  determined by the instantaneous relative pose of the contacting pair.
In some situations, the pressure field model agrees, or nearly agrees, with one of these models.
The main distinguishing features of the pressure field model are \1 its output varies continuously with changing geometry, \2 it is conservative (associated with a contact potential energy), \3 it does not depend on small relative angles of the contacting surfaces, nor even simply connected regions of overlap, and \4 it can be tuned to make contact stiffness penetration dependent.
% for variable stiffness with position, including nonlinear stiffness. 

\par
\textbf{Relation to hydrostatics.}
The pressure in fluids at equilibrium increases linearly with depth.
Boats float due to this pressure acting on their hulls.
% This pressure acts on, say, the hull of a boat that penetrates the surface.
The PFC model is identical to this hydrostatic model in the case where one of our objects has a flat surface and we choose a pressure field that increases linearly with distance below the surface, and where the penetrating object is rigid (\ie has infinite pressure just inside its outer surface); see Figure~\ref{fig:hydro_vs_spring}.
In this case, the equal-pressure surface ${\mathcal S}^\cap$ is thus the outer boundary of the rigid object.  

\par
For non-flat objects, PFC generalizes hydrostatics. We keep the pressure field but not tie it to equilibrium of a real or imagined fluid.
We can select pressure fields where contact stiffness varies with penetration depth in a manner we choose. 
Also, both objects have pressure fields (which express each object's compliance), not just one. Thus the interaction compliance depends on the compliance of both objects.

\begin{figure}[t]
\centering
\includegraphics[width=0.6\linewidth]{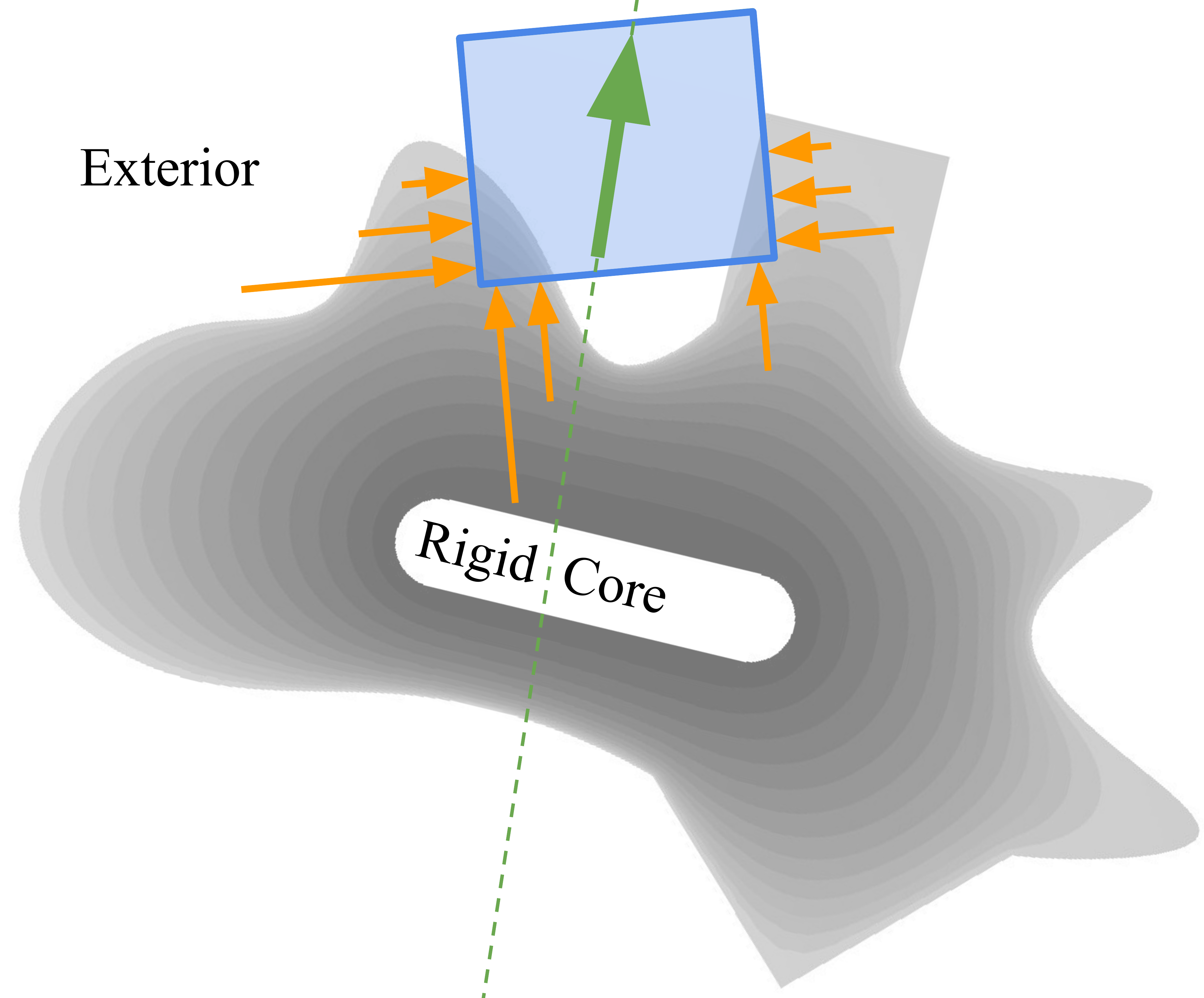}
\caption{\footnotesize\small
    The pressure field for a non-trivial 2D geometry is visualized with gray isocontours.
    A rigid box (blue) penetrates this compliant geometry.
    Orange arrows show the normal stress the compliant object exerts on the rigid geometry.
    The direction of net force is shown in green.
}
\label{fig:iso_laplace}
\end{figure}

\par
\textbf{Relation to elastic foundations.}
The contact of gently curved objects and an elastic layer is equivalent to contact with a continuous bed of springs (for long wavelengths), with normal traction proportional to deflection.
If surfaces of the penetrating objects are not parallel, the elastic foundation model gives traction in the penetration direction while the hydrostatic model gives traction orthogonal to the surface (see Figure~\ref{fig:hydro_vs_spring}).
Even for the contact of a uniform thickness compliant layer and a rigid object, for large relative surface angles, both continuous and discrete elastic foundation models are generally not conservative because the spring ends are generally not anchored (\ie associated with particular material points).
Thus such models, although nominally made from springs, can add or subtract energy from a system.
As noted, the pressure field model is conservative.

\begin{figure}[t!]
\centering
\includegraphics[width=0.49\linewidth]{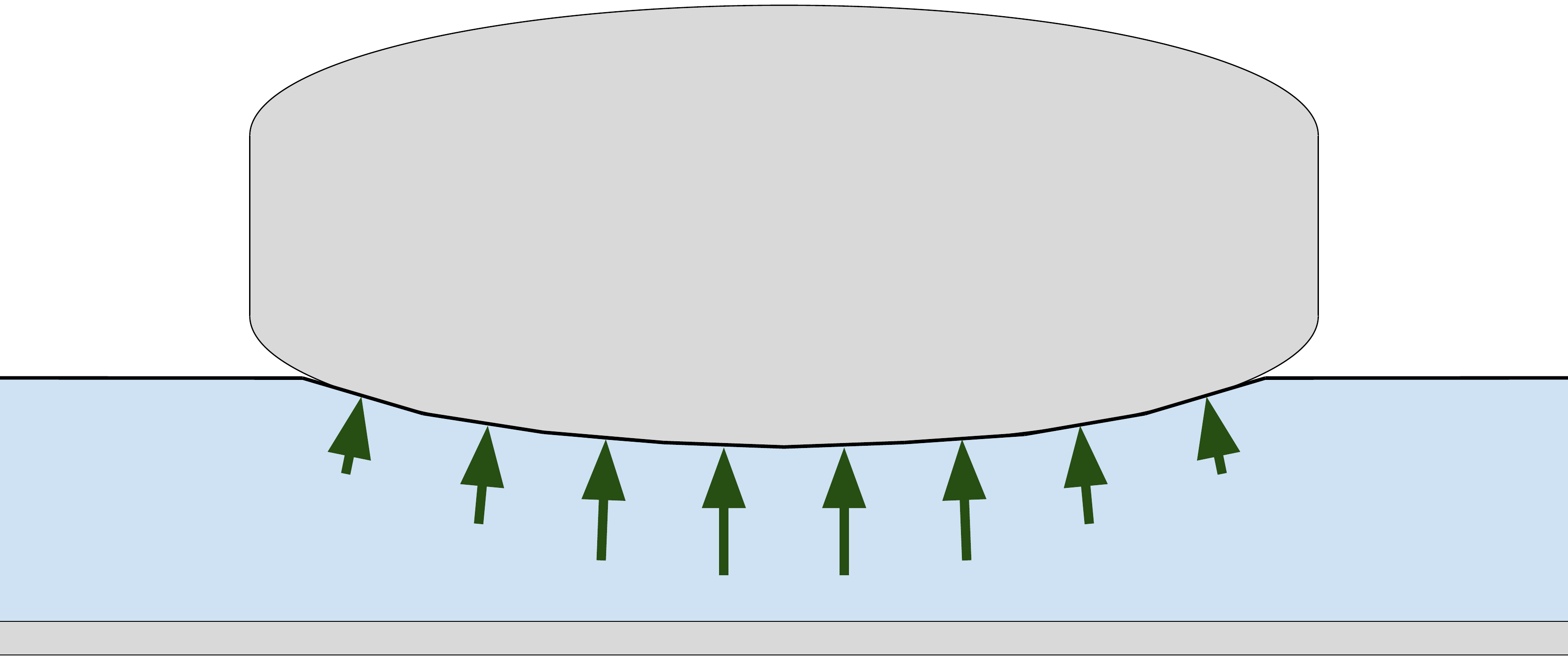}
\includegraphics[width=0.49\linewidth]{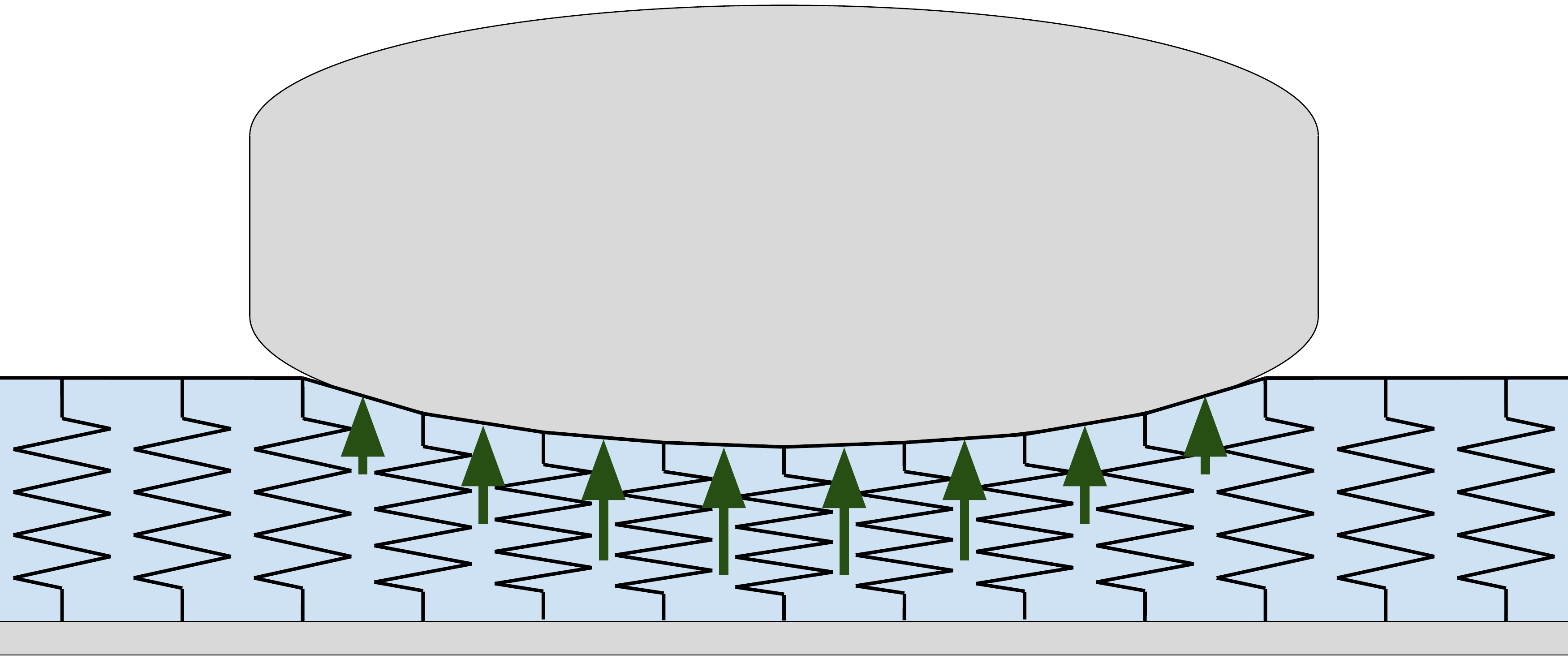}
\caption{\footnotesize\small
    \textbf{Left:}
    Hydrostatics applies a normal traction (\ie pressure) to the penetrating surface.
    \textbf{Right:}
    Elastic foundation models push objects away from an elastic layer.
    This model is not well defined for interpenetrating surfaces with large relative angles.
    \textbf{Both:}
    These models become identical as the penetrating surface becomes flatter.
    In both cases, the traction magnitude is proportional to a penetration distance.
}
\label{fig:hydro_vs_spring}
\end{figure}

\par
\textbf{Relation to overlap volumes.}
Overlap volume models are not explicitly mechanical in their contact wrench calculation.
Instead, a force proportional to volume overlap is applied in some kind of average normal direction, at an average contact location.
% The pressure distribution, for calculating moments (especially friction moments) is not considered.
The methods do not seem sensible, or even well defined, in  situations with complex geometry.
Where the surfaces are only gently curved, the overlap, elastic foundation, and hydrostatic models seem to be generally equivalent, and are all special cases of PFC.

\section{Adding dissipation}
% \begin{figure}[t]
% \centering
% 	\includegraphics[width=0.5\textwidth]{}
% \caption{\footnotesize\small 
%     Penetration extent $\epsilon$ remains constant during sliding (left) and changes during rolling (right).
%     \label{fig:slide_vs_roll}
% }
% \end{figure}

To model dissipative contact, we add
% In order to model non-fully-elastic collisions, and to prevent continued contact vibrations, we add a model for dissipation in the contact region. We do this by adding
a term to the pressure that depends on the approach rate of the two bodies at any given point.
Specifically, a general damping function $f$ is dependent upon the pressure, pressure gradient, and approach rate at some point $\point{R}$ on the contact surface. The normal pressure at $\point{R}$ is then defined as:
\begin{equation}
p^{\point{R}} = p_0^{\point{R}} + f(p_0^{\point{R}}, \nabla p_0^{\point{R}}, \RDot, \surfacenormal{n}{R}) \label{eqn:pressure_plus_damping}
\end{equation}
where $p_0^{\point{R}}$ and $\nabla p_0^{\point{R}}$ are the pressure and pressure gradient of the conservative pressure-field at $R$, $\RDot$ is the relative velocity of the two bodies at $R$, and $\surfacenormal{n}{R}$ is the normal direction at $R$. For realizing non-negative dissipation $\chi$ with the Hunt and Crossley model~\cite{Hunt:1975}, for example, one might use the formula:
% \begin{align}
%     f(p_0^{\point{R}}, \RDot \cdot \surfacenormal{n}{R}) \equiv \chi\, p_0^{\point{R}}\, (\RDot \cdot \surfacenormal{n}{R}).  \label{eqn:hc_damping}
% \end{align}

\begin{align}
        f(p_0^{\point{R}}, \nabla p_0^{\point{R}}, \RDot, \surfacenormal{n}{R})
        \equiv
        \chi\,
        p_0^{\point{R}}\,
        \frac{\nabla p_0^{\point{R}}}{E}
        \cdot
        \hspace{-0.7cm}
        \underbrace{(\surfacenormal{n}{R} ( \RDot \cdot \surfacenormal{n}{R}))}_{\text{velocity normal to contact surface}}
        \hspace{-0.6cm}
        .
        % \chi\, p_0^{\point{R}}\, (\RDot \cdot \surfacenormal{n}{R}).  
        \label{eqn:hc_damping}
\end{align}

\par
\textbf{Friction.} The pressure-field model predicts normal tractions at the calculated equal pressure field surface ${\mathcal S}^\cap$.
For Point $\point{R}$, we can calculate the slip velocity by calculating the relative velocities of the two objects at $\point{R}$, and projecting that onto ${\mathcal S}^\cap$.
Then we can use any friction law that uses pressure and slip velocity.  The simplest is to apply a shear traction against the direction of the projected relative velocity, with magnitude $\mu p^{\point{R}}$.

%%%%%%%%%%%%%%%%%%%%%%%%%% IMPLEMENTATION %%%%%%%%%%%%%%%%%%%%%%%%%%
\section{Implementation}
\label{section:implementation}

% \par\noindent
% \textbf{Brief outline of the numerical algorithm.}
This method is comprised of both offline components (discussed in \ref{s:compute_pressure_field}) and online components (discussed in \ref{s:implementation_online}).
% This method has online and offline components.
% The online components are discussed in \ref{s:implementation_online}, and the offline components are discussed in \ref{s:compute_pressure_field}.
Our implementation uses a tetrahedral (tet) approximation of geometry.
For example, a sphere might be approximated by an icosahedron (20 faces) using 20 tets that share a common vertex.
Fewer tets generally implies faster simulations.

\subsection{Computing a pressure field} \label{s:compute_pressure_field}
PFC uses fields that are zero on object boundaries and positive within the interior.
We define $p_0$, the pressure field over an object's domain, as the product of elastic modulus and a quantity $\epsilon$ that we call `penetration extent'; for Point $R$, $p_0^{\point{R}} = E \epsilon^{\point{R}}$ (other mappings from $\epsilon$ to $p_0$ are possible). We define penetration extent to be zero at the object surface and unity at points far from the boundary (at the medial axis, for example). Note that defining $p_0$ as a function of $\epsilon$ allows for non-linear pressure fields using only a single layer of tets from the object boundary to its interior. 

\par
Aside from $\epsilon$, $\nabla \epsilon$ may also be necessary for computing dissipation contributions to tractions since $\nabla p_0 = E \nabla \epsilon$ (more on this below). In general, $\nabla \epsilon$ is not continuous over the entire domain of $\epsilon$. As a workaround, PFC can use an approximation to $\nabla \epsilon$ that we denote $\tilde{\nabla \epsilon}$.

\par
An example of how the pressure field is defined in practice follows.

\subsubsection{Example pressure field for a solid cube}
Here we describe how to create $\epsilon$ (hereafter denoted the \emph{penetration extent field}) and $\tilde{\nabla \epsilon}$, for a solid foam cube centered at the origin.
We divide the cube into 12 tets that share a common vertex at the origin.
The exterior of the cube can participate in contact, so $\epsilon=0$ for all vertices on the boundary while $\epsilon = 1$ at the center of the cube. 

Since $\nabla \epsilon$ is undefined at each vertex, we define $\tilde{\nabla \epsilon}$ at each corner vertex to be a unit vector pointing to the cube origin. $\tilde{\nabla \epsilon}$ is defined to be zero at the cube origin as all tets share this vertex. $\epsilon$ can be defined for many simple shapes using symmetry and similar reasoning. 

\subsubsection{Arbitrary pressure fields with Laplace's Equation} \label{s.laplace}

We now discuss a general way to generate $\epsilon$ and $\tilde{\nabla \epsilon}$ for arbitrary geometry.
We use the following PDE, which is known as Laplace's equation:
\begin{equation} \label{eqn:laplace}
\begin{split}
0 =&~  \frac{\partial^2 \epsilon}{\partial x^2} + \frac{\partial^2 \epsilon}{\partial y^2} + \frac{\partial^2 \epsilon}{\partial z^2}.
\end{split}
\end{equation}

This PDE is solved over a Cartesian domain subject to boundary conditions. It's been used to model various physical phenomena, including steady state heat distributions in thermodynamics.
Figure \ref{fig:iso_laplace} shows a solution to Laplace's equation for an arbitrary 2D geometry.
% Solutions to Laplace's equation are known to possess no minima in the interior, a necessary trait for practicable pressure fields.
We generate $\epsilon$ by solving Laplace's equation over the interior of an object subject to unit boundary conditions: zero on the boundary and unity far away from the boundary (\eg at the medial axis or at the boundary of a rigid core inside the otherwise compliant object). Solutions to Laplace's Equation are smooth, so $\tilde{\nabla \epsilon} \equiv \nabla \epsilon$ in an object's interior.

\subsection{Calculating contact forces} \label{s:implementation_online}

Calculating contact forces corresponding to a multibody state consists of three steps:
\1 find all tet-tet pairs the contact surface lies in,
\2 discretize the contact surface and calculate quantities needed to compute traction at quadrature points, and 
\3 compute force and moment.  %  by adding traction contributions at quadrature points.

\subsubsection{Finding the contact surface}

The tet mesh data structure for each object is pre-computed and stores $\epsilon^V$ and $\tilde{\nabla \epsilon}^V$ at each vertex $V$.
A bounding volume hierarchy (BVH) is built on top of this mesh for broad phase collision, and minimizes the number of primitive-level, tet-tet checks.
% It is important to note that the contact surface $$

% The local nature of this approach means that we do not need to know the connectivity . 

\subsubsection{Discretizing the contact surface}
% \par
The pressure isosurface between pairs of linear tetrahedral elements lies on a plane (see Appendix \ref{appx:tet_tet_intersect_proof}). If not empty, the intersection between the tets and this plane yields a polygonal piece of the contact surface. 
So solving for the contact surface between each tet-tet pair entails solving only a fast polygon clipping problem.
The contact surface consists of convex polygons from \emph{all} pairs containing part of the isosurface.

\par
We tessellate each of these polygons into triangles and then compute the integral in \eqref{eqn:int_for_f} over each triangle with quadrature.
Quadrature rules approximate this integral by sampling the contact wrenches at particular \emph{quadrature points} within each triangle. 
\par
Solving initial value problems and optimizing trajectories are more efficient when derivatives are continuous function of state.
% Attaining continuity with quadrature requires 
For PFC, this requirement entails making quadrature points continuous functions of state.
PFC satisfies this requirement deterministically and without tracking quadrature locations by tessellating the intersection polygons in the following way: an $n$ sided polygon is divided into $n$ triangles that share a vertex at the polygon's centroid (see Figure~\ref{f.quad_sub_edges}).
This ensures that only zero-area triangles are added/removed as objects move, so that contact surface discretization changes do not introduce force discontinuities.
Algorithm~\ref{alg:compute-contact-surface} describes how to compute and tessellate the contact surface.

\subsubsection{Integrating tractions}
\label{section:integrating-tractions}

% A preprocessing phase calculates a pressure field for each object, discretized over a tetrahedral mesh.
% At run time, for each pair of contacting objects:
% \begin{enumerate}
%     \item Locate the contact (equal pressure) surface ${\mathcal S}^\cap$.
%     \item Populate data structures on ${\mathcal S}^\cap$ that give pressure, slip velocity, and traction at quadrature points (see ~\S\ref{section:implementation}). 
%     \item Use quadrature to find the net force and moment (wrench) that result from integrating tractions over ${\mathcal S}^\cap$ % (see ~\S\ref{section:computing-traction-and-integrating}). 
% \end{enumerate}

% \subsection{Computing the traction at a point on the deformed surface}
% \label{section:computing-traction-and-integrating}
`Traction', denoted $\traction{T}$, is the limit of the forces acting at a cross-section of material as the area goes to zero.
It is a 3D vector with units of pressure. %  (e.g., Pascals).
Quantities needed to compute traction for bodies $A$ and $B$ are:

\begin{description}
\item [$\point{R}$] a point on the contact surface (in $\mathbb{R}^3$).
\item [$\point{R}_A$] a point on Body $A$ instantaneously coincident with $\point{R}$.
\item [$\point{R}_B$] a point on Body $B$ instantaneously coincident with $\point{R}$.
\item [$\RDot$] the relative Cartesian velocity of the objects at $\point{R}$ (i.e., $\RDot = \dot{\point{R}}_A - \dot{\point{R}}_B$).
\item [$\surfacenormal{n}{R}$] the unit vector normal to the contact surface at $\point{R}$ and pointing from Body $B$ to Body $A$.
\end{description}

The normal traction $\traction{T}^R_{\hat{n}}$ at $\point{R}$ is given by $\traction{T}^R_{\hat{n}} = p^R \surfacenormal{n}{R}$, where $p^R$ is computed using \eqref{eqn:pressure_plus_damping}. Let $\RDotNHat$ and $\RDotPerp$ denote the normal and tangential velocities respectively (i.e., relative velocity between $A$ and $B$ in the plane normal to $\surfacenormal{n}{R}$) where:
\begin{align}
\RDotNHat & \equiv
(\RDot \cdot \surfacenormal{n}{R}) \surfacenormal{n}{R}  \\
\RDotPerp & \equiv \RDot
-
\RDotNHat
\end{align}

Adding Coulomb friction is straightforward (this model must  be `regularized' \cite{Ruina:2015} for continuity, not shown):
\begin{align}
\traction{T}_F^{\point{R}} = \begin{cases}
-\mu p^{\point{R}} \frac{\RDotPerp }{ || \RDotPerp|| } & \textrm{ if } || \RDotPerp|| > 0, \\
0 \in \mathbb{R}^3 & \textrm{ if } || \RDotPerp|| = 0.
\end{cases}
\end{align}

The total traction at $\point{R}$ is the sum of the normal and frictional contributions:
$\traction{T}^{\point{R}} = \traction{T}_{\hat{n}}^{\point{R}} + \traction{T}_F^{\point{R}}$.
$\traction{T}^{\point{R}}$ induces a moment $\point{R} \times \traction{T}^{\point{R}}$ on the bodies.
The net wrench $\wrench{f}$ acting on contacting objects $A$ and $B$ is the integral of tractions over the entire contact surface $\mathcal{S}^{\cap}$:
\begin{align}\label{eqn:int_for_f}
    \wrench{f} = \int \begin{bmatrix}
    R \times \traction{T}^{\point{R}} \\
    \traction{T}^{\point{R}} \\
    \end{bmatrix} d \mathcal{S}^{\cap}.  % TODO: Figure out what do do with surface integrals
\end{align}
Note that the integral of a pressure (in, \eg $N/m^2$) over a surface area (in, \eg $m^2$) is a force (in $N$ in the case of the units we used for pressure and area), as expected.

\begin{figure}[t!]
\centering
\includegraphics[width=.7\linewidth]{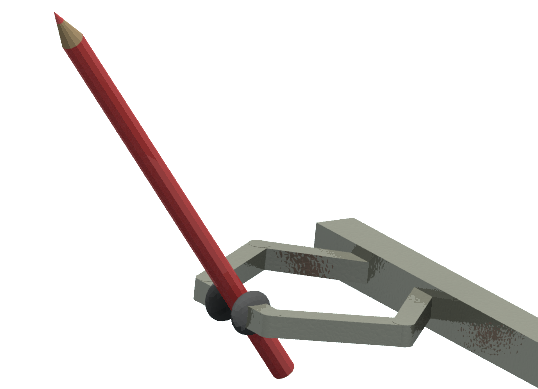}
\caption{\footnotesize\small
Demonstration of a gripper manipulating a pencil in the context of robot controller design using simulation. In this task, the gripper allows the pencil to rotate without falling by loosening the pinch grasp; tightening the grasp causes the rotation to stop. Most point-contact models fail to model this task correctly.
}
\label{f:pencil}
\end{figure}

%  the following data structures:
% \begin{itemize}
% \item \textbf{Tetrahedral mesh} 

% \item \textbf{Polygonal contact surface}, tessellated into triangles.
    
%     \item \textbf{Polygonal domain field representation} for traction, pressure, and contact slip.
%     Our structures permit evaluating this data at any point on the shared domain (the contact surface), and includes methods for performing quadrature over these same domains.
% \end{itemize}

\begin{algorithm*}
\caption{ComputeContactSurface() computes the contact surface between two objects, $A$ and $B$.} \label{alg:compute-contact-surface}
\begin{algorithmic}[1]
\Function{ComputeContactSurface}{$A$, $B$}
\State $\mathcal{S} \leftarrow \emptyset$ \Comment Initialize the contact surface
\State $T \leftarrow$ pairs of tetrahedra from $A$ and $B$ flagged for possible intersection in a broad phase
\ForAll{$T_{A_i}, T_{B_i} \in T$}
  \State $\epsilon_{A_i} \leftarrow$ penetration extent field from $T_{A_i}$
  \State $\epsilon_{B_i} \leftarrow$ penetration extent field from $T_{B_i}$
  \State $\rho \leftarrow$ equilibrium pressure plane from \eqref{eqn:pressure-equilibrium} using $\epsilon_{A_i}$ and $\epsilon_{B_i}$ and Young's Moduli $E_A$ and $E_B$
  \State $P \leftarrow T_{A_i} \cap T_{B_i} \cap \rho$ \Comment{Compute the intersecting polygon}
  \If{$P \neq \emptyset$}
  \State Create a vertex $v_c$ in the centroid of $P$
  \State Tessellate $P$ into set of triangles $\mathcal{T}$ by connecting all edges of $P$ to $v_c$
  \State $\mathcal{S} \leftarrow \mathcal{S} \cup \mathcal{T}$ \Comment Update the contact surface
  \EndIf
\EndFor
\State \textbf{return} $\mathcal{S}$
\EndFunction
\end{algorithmic}
\end{algorithm*}

\begin{figure*}[t]
\centering
\includegraphics[width=1.0\textwidth]{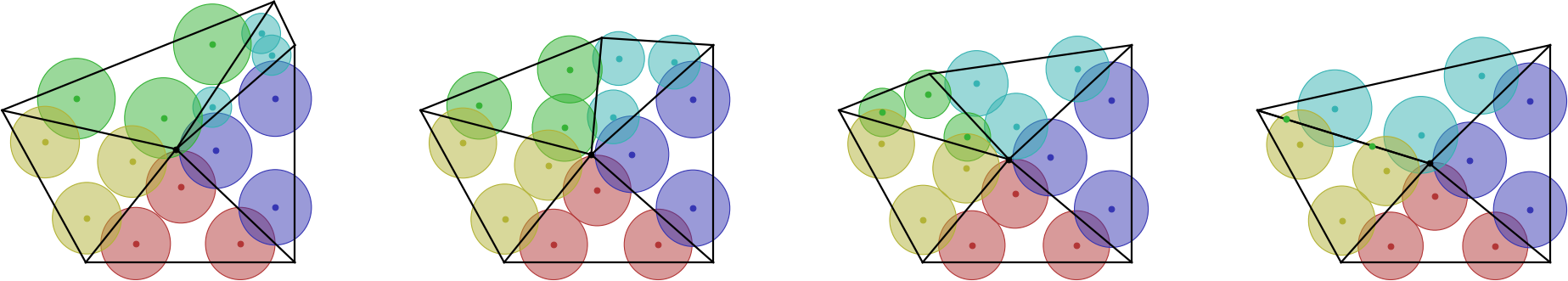}
\caption{\footnotesize\small
    An intersection polygon loses one of its five sides (green circles, left to right).
    The polygon is divided into triangles that share a vertex at the intersection polygon's centroid.
    Small circles represent the position of quadrature points.
    Large circles represent the weight of quadrature points.
    The weight and position of quadrature points move continuously even as an edge disappears.
}
\label{f.quad_sub_edges}
\end{figure*}

% Necessary operations:
% * tetrahedral mesh/triangle mesh intersection. Pose in, wetted surface out.

\section{Asymptotic Time Complexity Analysis}
The time complexity of this contact model follows from the cost of recursively comparing two bounding volume hierarchies; other operations (e.g., polygon clipping) scale linearly with the number of intersecting tets.
% The time complexity of evaluating the contact model is the cost of recursively comparing two bounding volume hierarchies plus the cost of constructing tet-tet-plane intersections for every element flagged in the broad phase.
Suppose that $A$ has $n_A$ tets and $B$ has $n_B$ tets, of which $m$ intersect.
The expected time complexity of collision detection is $O(m \lg (n_A + n_B))$ for a BVH\footnote{The depth of a balanced tree with $n_A$ leaves is $O(\lg n_A)$.
Therefore, $O(\lg(n_A + n_B))$ intersection tests need to be performed against each object.}.
In the worst case, $m$ is $O(n_A n_B)$ and corresponds to intersecting combs (i.e., a Chazelle polyhedron) as Fig~1.4 of~\cite{Nguyen:2006} shows.
In practice, recursive hierarchy traversal is usually small; most meshes are not degenerate in this way.
% The tet-tet-plane intersection operation is constant-time: the resulting polygons are limited in edges.
% By extension, integrating the traction field over the contact surface is a constant time operation.

%%%%%%%%%%%%%%%%%%% DEMONSTRATIONS AND EXPERIMENTS %%%%%%%%%%%%%%%%%%%
\section{Experiment and demonstrations}
This section describes an experiment and some demonstrations that we conducted using multibody dynamics simulation.
Our simulation was built in Julia and used the RigidBodyDynamics library~\cite{rigidbodydynamicsjl} for dynamics and a custom implementation of the single-step variable order implicit integrator RADAU Hairer describes in \cite{Hairer:1999}.

\subsection{Experiment against FEM} \label{section:experiments}

\par 
We checked the agreement of the quasi-static (i.e., pressure field) aspect of PFC by comparing the normal force produced by PFC against that produced by finite strain elasticity for a periodic 2D plane scenario.
Finite strain elasticity was simulated in \texttt{ANSYS}.
For this scenario, a rigid sinusoidal surface with amplitude $\eta$ and wavelength $\lambda$ was pressed a distance $d$ into a flat, frictionless, compliant layer of unit thickness and Poisson's ratio $\nu = 0.3$.\footnote{$\nu = 0.3$ is typical for many materials including certain foams.}
Figure~\ref{fig:sin_geo} shows this sinusoid and compliant layer when $\lambda$ = $2 \pi / 3$ for three amplitude values $\eta \in [0.0, 0.166, 0.333]$.
The normal force produced by FEM and the pressure field are identical when $\eta = 0$ (conforming contact), as seen in Figure~\ref{fig:sin_geo}.
At the largest amplitude, when $\eta = 0.333$ (point-like contact), the pressure field method underestimates the normal force by 23\%. This experiment shows that PFC agrees best with FEM for conforming contacts (common in grasping), and underestimates forces for point contacts, as expected from PFC's neglect of Poisson effects.

\subsection{Demonstrations}
As shown in the accompanying video, we have tested our approach on a number of manipulation scenarios --- chopsticks manipulating small, soft objects, a robot hand picking a mug, bolt threading in a hole, and two boxes manipulating a spoon --- showing that we can simulate both rigid and compliant objects with convex and non-convex geometries on challenging tasks.
We have tested the approach on contacting boxes and a ball bouncing on a surface (which demonstrates how we can visualize fields like pressure) defined over the contact surface. And we have devised demonstrations that use simulations with the pressure field contact model to design controllers for robots; Figure~\ref{f:pencil} shows one such scenario in which we leverage the scrubbing torques that the model can predict --- and which point based contacts struggle to simulate --- to control the rotation of a pencil within a grasp.

%%%%%%%%%%%%%%%%%%%% DISCUSSION %%%%%%%%%%%%%%%%%%%% 
\section{Discussion} \label{s295}
\label{section:discussion}

PFC does not capture certain features of elastic materials including anisotropy, Poisson's effects, and shear effects.
Anisotropic materials like wood do not deform in the direction of applied loads because they are stiffer `along the grain'.
Poisson effects describe how materials deform in directions perpendicular to applied loads (e.g., a rubber band stretched in one direction gets thinner in the other two directions).
When a rigid sphere penetrates a half-plane, the surface of the half-plane near the contact region stretches due to shear effects.
As PFC does not predict these effects, it is most accurate when contact is conforming, like how fingers conform to surfaces during grasping.

\par
We are targeting contact applications that require physical realism but do not require highly accurate predictions of contact deformation. That is, the model must behave like a \emph{physical} system (resists penetration and sliding, produces continuous forces, is energetically consistent, etc.) but does not have to perfectly predict any particular experiment.
We believe this approximation is sufficiently accurate for predicting multi-object contact dynamics, because \1 feedback controllers (in robotics applications) are expected to be able to mitigate the effects of unmodeled disturbances, \2 the initial conditions (particularly deformations) can not be known to sufficient accuracy to get the benefits of increased contact modeling accuracy, and \3 speed can be more important than accuracy in many applications, such as learning control policies.

\par
An open source C++ implementation of this model is available in Drake (\url{https://drake.mit.edu}).

\begin{figure}[t]
    \centering
    \includegraphics[width=0.9825\linewidth]{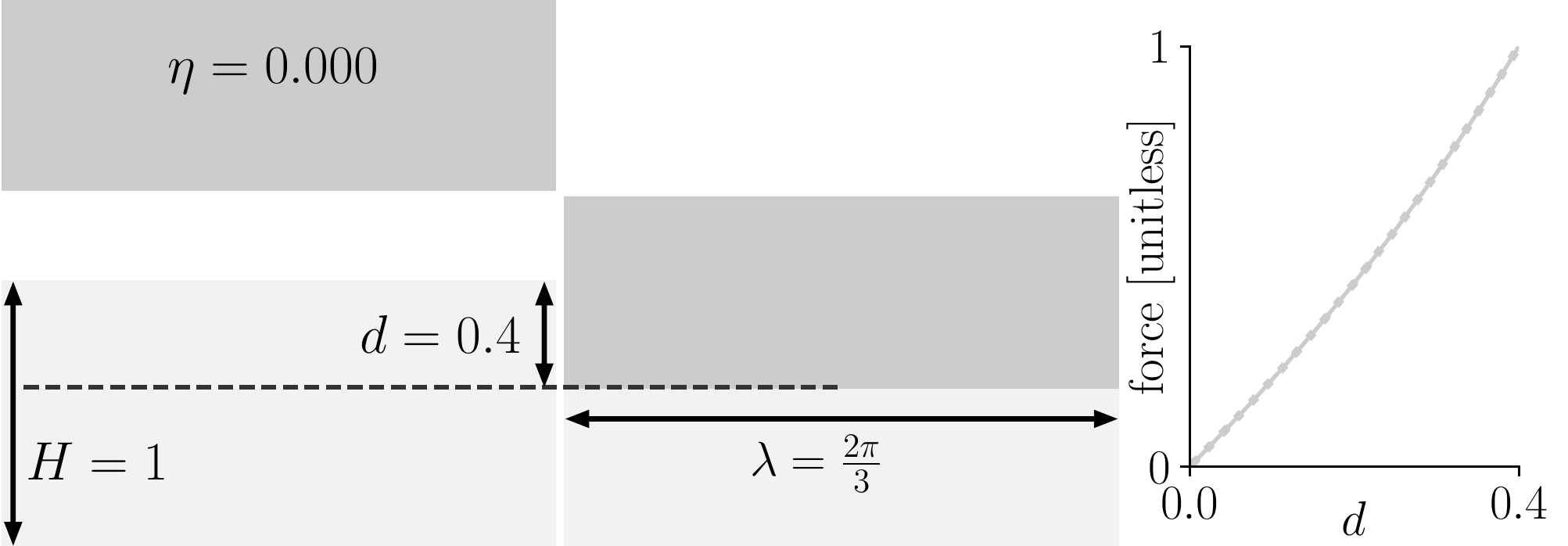}\\
    ~\vspace{-0.3cm}\\
    \includegraphics[width=0.9825\linewidth]{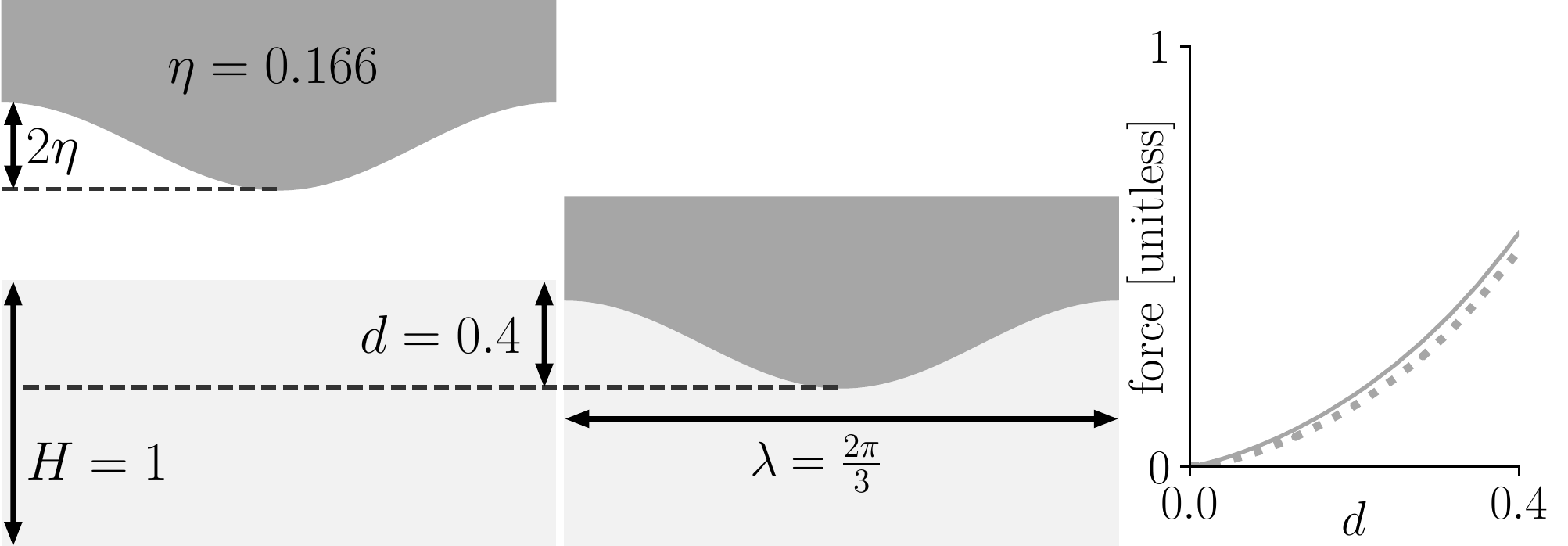}\\
    ~\vspace{-0.3cm}\\
    \includegraphics[width=0.9825\linewidth]{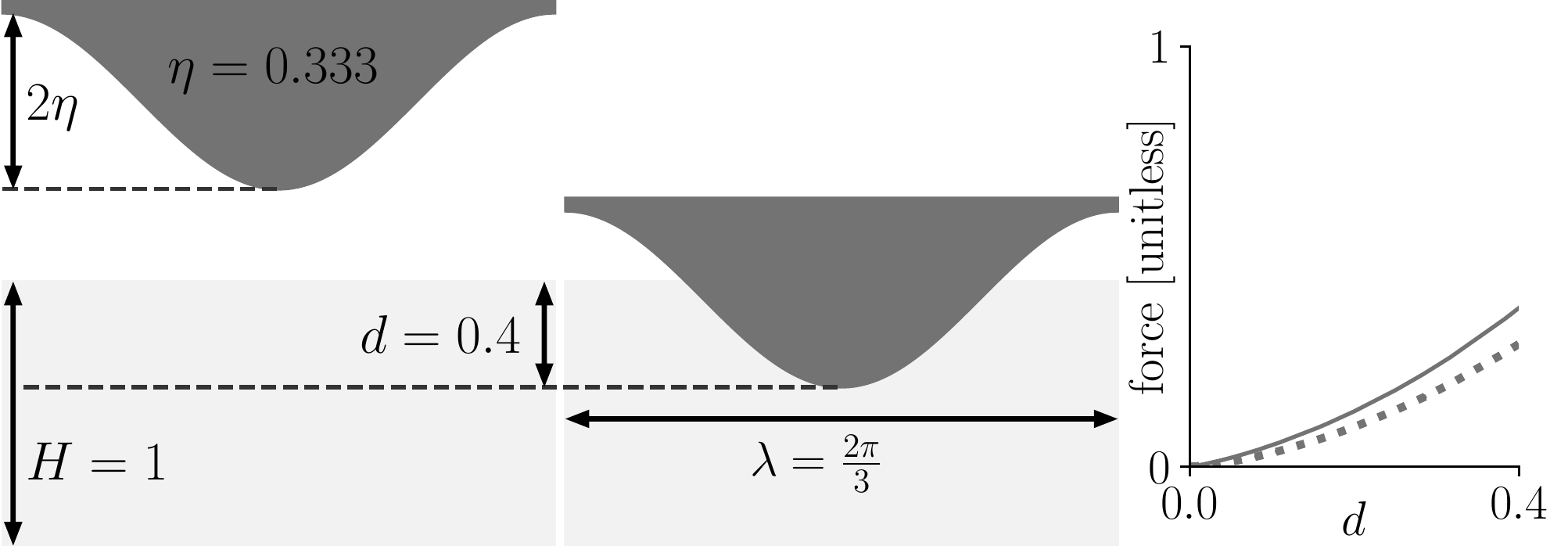}
    \caption{\footnotesize\small
        
    	\textbf{Left:}
    	Sinusoids of varying amplitude above a unit thickness layer.
    	
    	\textbf{Middle:}
    	Sinusoids pressed a distance $d = 0.4$ into the layer. 
    	
    	\textbf{Right:} 
    	Normalized forces predicted by FEM (solid) and PFC (dotted) for each sinusoid.
    	The normalized force is 1 when $\eta=0$ and $d=0.4$.
        
        FEM calculations accounted for large deformations and took all calculations to convergence.

    }
    \label{fig:sin_geo}
\end{figure}

%%%%%%%%%%%%%%%%%%%%%%%%%% APPENDIX %%%%%%%%%%%%%%%%%%%%%%%%%%
\section{Appendix}
The appendix consists of proofs for the relationship between the static pressure function and potential energy (\S\ref{s:pot_energy_proof}), the location of the equilibrium pressure isosurface for pressure fields defined on linear tets (\S\ref{appx:tet_tet_intersect_proof}), and for the continuity of forces determined using non-continuous (discrete) geometry (\S\ref{appendix:force-continuity}).

\subsection{The potential energy of a pressure field contact} \label{s:pot_energy_proof}

\begin{thm}
The formula:
\begin{equation*}
\begin{split}
U_{p_0} = \int_{V^\cap} p_0(x,y,z) ~dx ~dy ~dz,
\end{split}
\end{equation*}
(aka. Eqn \eqref{e:int_P_dV}) gives the deformation energy.
\end{thm}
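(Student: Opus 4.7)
The plan is to show that $U_{p_{0_A}}+U_{p_{0_B}}$ serves as a potential function for the PFC contact interaction --- its time derivative exactly equals the negative of the rate at which the PFC tractions do work on the two bodies. Fix notation: let $V_A^\cap$ be the portion of the overlap $A\cap B$ on which $p_{0_A}\le p_{0_B}$, i.e., the subregion lying on $A$'s side of the equal-pressure surface (the ``volume of $A$ displaced''). Its boundary has two pieces: $\partial A\cap B$, on which $p_{0_A}=0$, and $\mathcal{S}^\cap\cap(A\cap B)$, on which $p_{0_A}=p_{0_B}\equiv p$ is the PFC contact pressure. Define $V_B^\cap$ symmetrically; it lies across $\mathcal{S}^\cap$ from $V_A^\cap$.

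First I would differentiate $U_{p_{0_A}}(t)=\int_{V_A^\cap(t)}p_{0_A}\,dV$ in a fixed world frame via Reynolds' transport theorem. Since $p_{0_A}$ is rigidly attached to $A$, its Eulerian time derivative is $-\vec v_A\cdot\nabla p_{0_A}$, and $\nabla\cdot\vec v_A=0$ because the motion of $A$ is rigid. The divergence theorem then converts the bulk term into a surface flux, giving
\begin{equation*}
\frac{dU_{p_{0_A}}}{dt}
= \int_{\partial V_A^\cap}
  p_{0_A}\,(\vec v_\partial-\vec v_A)\cdot\hat n\,dS,
\end{equation*}
where $\hat n$ is the outward unit normal and $\vec v_\partial$ is the velocity of the moving boundary. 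The $\partial A\cap B$ contribution vanishes because $p_{0_A}=0$ on $\partial A$, so only the $\mathcal{S}^\cap$ integral remains, with $p_{0_A}$ replaceable by $p$.

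Second, I would perform the analogous calculation for $B$ and sum. Because $V_B^\cap$ lies on the opposite side of $\mathcal{S}^\cap$, its outward normal is $-\hat n$, and the unknown boundary velocity $\vec v_\partial$ drops out (its coefficient is $p_{0_A}-p_{0_B}=0$ on $\mathcal{S}^\cap$), leaving
\begin{equation*}
\frac{d}{dt}\bigl(U_{p_{0_A}}+U_{p_{0_B}}\bigr)
= \int_{\mathcal{S}^\cap} p\,(\vec v_B-\vec v_A)\cdot\hat n\,dS.
\end{equation*}
This right-hand side is precisely minus the mechanical power delivered to $A$ and $B$ by the equal-and-opposite PFC normal tractions $\pm p\,\hat n$, so the work--energy theorem identifies $U_{p_{0_A}}+U_{p_{0_B}}$ as the stored deformation energy of the pair. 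Since each summand depends on only one object's pressure field, the per-body formula $U_{p_0}=\int_{V^\cap}p_0\,dV$ is naturally the portion attributable to that object alone.

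The main obstacle is handling the moving equal-pressure surface $\mathcal{S}^\cap$: it is not a material surface of either body, so $\vec v_\partial$ is non-physical and is constrained only by the level-set kinematic condition $\nabla p_{0_A}\cdot(\vec v_\partial-\vec v_A)=\nabla p_{0_B}\cdot(\vec v_\partial-\vec v_B)$, which would be awkward to solve explicitly. The elegance of the argument --- and what makes conservativity hold for \emph{any} choice of pressure fields, without any equilibrium assumption on $p_{0_A}$ or $p_{0_B}$ --- is that $\vec v_\partial$ cancels when the $A$ and $B$ contributions are summed, thanks solely to the defining equality $p_{0_A}=p_{0_B}$ on $\mathcal{S}^\cap$ together with the opposite outward normals of $V_A^\cap$ and $V_B^\cap$.
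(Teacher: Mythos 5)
Your proof is correct and rests on the same engine as the paper's --- Reynolds' transport theorem applied to $\int_{V^\cap} p_0\,dV$, with the boundary flux identified as the elastic power --- but your execution is noticeably more careful and actually closes a gap the paper leaves open. The paper works in an object-fixed frame to kill the $\dot p_0$ term and then simply asserts that the remaining flux term $\int (\vec v\cdot\hat n)\,p_0\,dA^\cap$ is ``the power delivered by elastic forces,'' without saying which velocity $\vec v$ is meant on the equal-pressure surface; since $\mathcal S^\cap$ is not a material surface of either body, that identification is not immediate for a single body taken alone. You isolate exactly this difficulty, split $\partial V_A^\cap$ into the piece on $\partial A$ (where $p_{0_A}=0$ kills the flux) and the piece on $\mathcal S^\cap$, and then show that the arbitrary boundary velocity $\vec v_\partial$ cancels only after summing the $A$ and $B$ contributions, using $p_{0_A}=p_{0_B}$ on $\mathcal S^\cap$ and the opposite outward normals. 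That cancellation is the real reason conservativity holds for arbitrary pressure fields, and it is implicit at best in the paper. Your world-frame computation via $\dot p_{0_A}=-\vec v_A\cdot\nabla p_{0_A}$ and $\nabla\cdot\vec v_A=0$ is equivalent to the paper's frame choice but makes the rigid-attachment assumption explicit. In short: same theorem, same transport-theorem skeleton, but your version is the one I would trust as a proof.
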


%%%% This figure was illustrates Reynolds transport theorem, but was cut due to lack of space.
% \begin{figure}[t]
%	\centering
%	\includegraphics[width=0.5\textwidth]{}
%	\caption{\footnotesize\small Reynolds transport theorem diagram.}
%	\label{f.rtt_diagram}
% \end{figure}

\begin{proof}
% The proof uses Reynolds transport theorem.
%Reynolds transport theorem is concerned with differentiating the integral of a scalar-, vector-, or tensor-valued function $\boldsymbol{f}$ over a time-varying region $V^\cap(t)$.
	Consider the integral of the pressure field $p_0$ over a volume $V^{\cap}(t)$.
	Reynolds transport theorem says that the time derivative of this integral is the sum of the flux of $p_0$ through the volume's boundary $A^{\cap}(t)$ and the volume integral of $\dot{p}_0$.
	
\begin{align} \label{e:rtt}
\frac{d}{dt} \int p_0 ~d V^\cap
=~
\underbrace{
\int \dot{p}_0 ~d V^\cap}_{\text{change due to~} \dot{p}_0 } +
\underbrace{\int (\velocity \cdot \nhat) p_0 ~d A^\cap}_{\text{change due to boundary}}
	\end{align}
	Without loss of generality, we perform calculations in an object-fixed frame to neglect the ``change due to $\dot{p}_0$'' term.
	In this frame, the term associated with the ``change on the boundary'' is the power delivered by elastic forces.
	\begin{align} \label{e:rtt_body_fix}
	\frac{d}{dt} \int p_0 ~d V^\cap
	=~
	\hspace{-0.6cm}
	\underbrace{\int \velocity \cdot p_0 \nhat ~d A^\cap}_{\text{power delivered by elastic forces}} 
	\hspace{-0.6cm}
	\end{align}
	In the absence of dissipation, if we take the time integral of~\eqref{e:rtt_body_fix} starting when the intersection volume is zero (i.e. $V^\cap = \emptyset$) we get \eqref{e:int_P_dV}.
	In other words, the potential energy is the work needed to displace a pressure field.
	% cause the penetration with a given pressure field. 

\end{proof}

%%%%%%%%%%%%%%%%%%%% ISOPARAMETRIC TETRAHEDRON INTERSECTION %%%%%%%%%%%%%%%%%%%%
\subsection{Linear isoparametric tetrahedron intersection proof}
\label{appx:tet_tet_intersect_proof}

Assume tet $A$ is defined by vertices $\point{\nu}_{A_1}, \ldots, \point{\nu}_{A_4} \in~\mathbb{R}^3$.
Points inside $A$ can be written as a linear combination of these four vertices; the scaling factors (barycentric coordinates), $\point{\zeta} \in \mathbb{R}^4$, are non-negative and sum to one.
% are subject to the following constraints:
% \begin{align}
% % \point{\zeta}_i & \geq 0, \textrm{ for } i = 1, \ldots, 4 \\
% \sum_{i=1}^4 \point{\zeta}_i & = 1.
% \end{align}
A $4 \times 4$ matrix denoted by $\transform{X}^A$, transforms points from $A$ to a Cartesian frame, e.g.:
\begin{align} \label{eqn:tet_vert_transform}
\begin{bmatrix} R \\ 1 \end{bmatrix} =\ 
\begin{bmatrix}
\point{\nu}_{A_1} & \point{\nu}_{A_2} & \point{\nu}_{A_3} & \point{\nu}_{A_4} \\
1 & 1 & 1 & 1
\end{bmatrix}
\point{\zeta}^A
=\ \!\transform{X}^A \point{\zeta}^A
\end{align}
where the subscript `$A$' ($_A$) denotes a quantity relevant to tet $A$ and the superscript `$A$' ($^A$) denotes a point in $A$'s barycentric coordinates. 

\label{section:tetrahedral-proof}
\begin{thm}
For linear pressure fields defined on compliant bodies, points of equal pressure for two tets lie on a plane.
\end{thm}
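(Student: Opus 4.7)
The plan is to show that each linear pressure field is an affine (degree-one polynomial) function of the Cartesian coordinates, so that the set where the two pressures agree is the zero set of an affine function, which is a plane.

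First I would express the pressure at any interior point of tet $A$ as $p_{0_A}(R) = \sum_{i=1}^{4} p_{0_A}^{\nu_{A_i}}\,\zeta_i^A$, i.e.\ the linear interpolation of the four vertex pressures by barycentric coordinates. Next, using the relation $\bigl[\begin{smallmatrix}R\\ 1\end{smallmatrix}\bigr] = \transform{X}^A \point{\zeta}^A$ from \eqref{eqn:tet_vert_transform}, I would invert $\transform{X}^A$ (which is nonsingular for a non-degenerate tet) to obtain $\point{\zeta}^A = (\transform{X}^A)^{-1}\bigl[\begin{smallmatrix}R\\ 1\end{smallmatrix}\bigr]$. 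Because this inverse acts linearly on $(x,y,z,1)$, each $\zeta_i^A$ is an affine function of the Cartesian coordinates of $R$, and therefore so is $p_{0_A}(R)$. In other words, one can write $p_{0_A}(R) = \point{g}_A \cdot R + h_A$ for some constants $\point{g}_A \in \mathbb{R}^3$ and $h_A \in \mathbb{R}$, and similarly $p_{0_B}(R) = \point{g}_B \cdot R + h_B$.

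The equal-pressure locus is then the set of $R$ satisfying $p_{0_A}(R) - p_{0_B}(R) = 0$, i.e.
\begin{equation*}
(\point{g}_A - \point{g}_B) \cdot R + (h_A - h_B) = 0.
\end{equation*}
This is the equation of a plane whenever $\point{g}_A \neq \point{g}_B$, with normal in the direction $\point{g}_A - \point{g}_B$. I would finish by noting that the restriction to the region $T_A \cap T_B$ only further intersects this plane with two convex polyhedra, producing a (possibly empty) convex polygonal piece of the contact surface, as used in \S\ref{section:implementation}.

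The only subtlety, rather than a genuine obstacle, is the degenerate case $\point{g}_A = \point{g}_B$: if additionally $h_A = h_B$ the two linear fields coincide on the overlap (so any equal-pressure ``surface'' is the whole overlap region), and if $h_A \neq h_B$ the equal-pressure set is empty. Both degenerate subcases are non-generic and are naturally excluded in practice (distinct materials, strictly positive interiors), so the statement holds in the intended sense: wherever the iso-surface is nonempty and not the entire overlap, it is a planar polygon.
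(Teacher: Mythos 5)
Your proof is correct and follows essentially the same route as the paper: both express the pressure as a linear interpolation of vertex values in barycentric coordinates, use the invertibility of the vertex matrix $\transform{X}^A$ to show each field is affine in Cartesian coordinates, and conclude that the zero set of the affine difference is a plane. Your explicit treatment of the degenerate case $\point{g}_A = \point{g}_B$ is a small addition the paper omits, but it does not change the substance of the argument.
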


\begin{proof}

\noindent
The static pressure $p_0$ at Point $R$ is $p_{0}^{\point{R}} = E_A \epsilon^{\point{R}}$, where $E_A \in \mathbb{R}$ is the Young's Modulus of object $A$.
Let $\boldsymbol{\epsilon}_A$ denote the values of $\epsilon$ at $A$'s vertices.
Cartesian position and $\epsilon$ are both dot products of $\point{\zeta}^A$ and the vertex values, therefore $\epsilon = \boldsymbol{\epsilon}_A \cdot  \point{\zeta}^A$.
Combining these relationships with \eqref{eqn:tet_vert_transform}:

\begin{align}
p_{0_A} =  E_A ~\boldsymbol{\epsilon}_A  \cdot \underbrace{ \bigg({}^A\!\transform{X} \begin{bmatrix} \point{R} \\ 1 \end{bmatrix} \bigg)}_{\point{\zeta}^A}
\end{align}

The quasi-static assumption says that the contact surface is described by the set of points where the pressures between two intersecting tets are equal, i.e.:
\begin{align}
p_{0_A}^{\point{R}} = p_{0_B}^{\point{R}},\ \forall \point{R} \in \mathbb{R}^3 \label{eqn:equal-pressures}
\end{align}

Put another way, Point $\point{R}$ in Cartesian space is on the contact surface if \eqref{eqn:equal-pressures} is satisfied.

\begin{equation}
0 =
\ \underbrace{E_A  \boldsymbol{\epsilon}_A \cdot  \bigg({}^A\!\transform{X} \begin{bmatrix} \point{R} \\ 1 \end{bmatrix} \bigg)}_{p_{0_A}^{\point{R}}}
- 
\underbrace{E_B \boldsymbol{\epsilon}_B \cdot \bigg( {}^B\!\transform{X} \begin{bmatrix} \point{R} \\ 1 \end{bmatrix} \bigg) }_{p_{0_B}^{\point{R}}} 
\end{equation}

After rearranging and collecting terms:
\begin{align} \label{eqn:pressure-equilibrium}
0 &=
\bigg( \frac{\boldsymbol{\epsilon}_A}{E_B} \cdot {}^A\!\transform{X} -
\frac{\boldsymbol{\epsilon}_B}{E_A} \cdot {}^B\!\transform{X} \bigg)  
\begin{bmatrix} R \\ 1 \end{bmatrix}
\end{align}

The expression in parentheses is a $1 \times 4$ matrix, therefore, scaling \eqref{eqn:pressure-equilibrium} by some constant produces the canonical equation for a plane: $0 = \hat{n} \cdot R + d$.
% i.e., the canonical equation for a plane.
%\eqref{eqn:pressure-equilibrium} is valid even when one of the objects is rigid.
Points of equal pressure for two tets lie on plane defined by $\hat{n}$ and $d$ that lies within the intersection of the tets.
\end{proof}

\subsection{Continuous forces from discrete geometry}
\label{appendix:force-continuity}

Some contact models such as those that combine minimum translational distance and point-contact produce forces that are not continuous functions of state (see, e.g.,~\cite{Williams:2014}). 
% \par
This section will show that the normal force produced by the pressure field model is a continuous function of state even in the presence of damping, discretization, and quadrature.
We denote the contact surface $\mathcal{S}^{\cap}$, and assume that this surface does not touch a rigid boundary of either object.

\par
Assuming the dissipation function in \eqref{eqn:hc_damping}, force from contact over the contact surface is:
% \begin{align} \label{eqn:int_p_n_hat_S}
% % \wrench{f} =& \int p_0 \big(1 + \chi (\RDot \cdot \surfacenormal{n}{R})\big) \surfacenormal{n}{R} ~d \mathcal{S}^{\cap}
% \wrench{f} =& \int p_0 \big(1 + \chi\, \nabla \epsilon \cdot (\RDot \cdot \surfacenormal{n}{R}) \big) \surfacenormal{n}{R} ~d \mathcal{S}^{\cap}
% \end{align}

\begin{align} \label{eqn:int_p_n_hat_S}
% \wrench{f} =& \int p_0 \big(1 + \chi (\RDot \cdot \surfacenormal{n}{R})\big) \surfacenormal{n}{R} ~d \mathcal{S}^{\cap}
\wrench{f} =& \int p_0 \big(1 + \chi\, 
\hspace{-0.2cm}
\underbrace{\nabla \epsilon}_{ \nabla p_0^{\point{R}} / E } 
\hspace{-0.2cm}
\cdot 
\hspace{-0.4cm}
\overbrace{ \RDotNHat }^{\surfacenormal{n}{R} ( \RDot \cdot \surfacenormal{n}{R})}
\hspace{-0.4cm}
\,
\big) \surfacenormal{n}{R} ~d \mathcal{S}^{\cap}.
\end{align}

% \begin{align}
%         f(p_0^{\point{R}}, \nabla p_0^{\point{R}}, \RDot, \surfacenormal{n}{R})
%         \equiv
%         \chi\,
%         p_0^{\point{R}}\,
%         \frac{\nabla p_0^{\point{R}}}{E}
%         \cdot
%         \hspace{-0.7cm}
%         \underbrace{(\surfacenormal{n}{R} ( \RDot \cdot \surfacenormal{n}{R}))}_{\text{velocity normal to contact surface}}
%         \hspace{-0.7cm}
%         .
%         % \chi\, p_0^{\point{R}}\, (\RDot \cdot \surfacenormal{n}{R}).  
%         \label{eqn:hc_damping}
% \end{align}

$p_0$ is a continuous function of state in the interior of each object as it is described by a weighted average of tet vertex values. Similarly, $\hat{n}$ (defined in \S\ref{section:integrating-tractions}) is continuous since it is the result of a normalization operation applied to a difference between two weighted averages of tet vertex values.

\subsubsection{At every instant, the contact surface is contiguous in the interior}

Tet $A_1$ belongs to mesh $A$.
Tets $B_1$ and $B_2$ belong to mesh $B$ and share a common face.
Let $P_{A_1 B_1}$ and $P_{A_1 B_2}$ denote the planes associated with the intersections of $A_1$ with $B_1$ and $B_2$ respectively.
$P_{A_1 B_1}$ and $P_{A_1 B_2}$ share the same intersection (if any) with the face common to $B_1$ and $B_2$.
As this coincident intersection is true in general, the contact surface is contiguous in the interior because the internal edges of clipped planes are shared.

\subsubsection{The contact surface is a continuous function of state}
As shown in \S\ref{section:tetrahedral-proof}, tet/tet pressure isosurfaces lie on planes.
The intersection plane associated with any pair of tets is a continuous function of state because \eqref{eqn:pressure-equilibrium} is differentiable.
It follows that $\hat{n}$ is a continuous function of state.
The contact surface is spatially continuous in the interior at all times and consists of planes whose movement is a continuous function of state.
It follows that the contact surface is a continuous function of state, for a given tessellation of the contact surface.

\subsubsection{Changes in contact surface mesh do not introduce discontinuities}
It remains to be shown that discrete changes to the contact surface mesh do not introduce discontinuities.
Figure \ref{f.quad_sub_edges} illustrates the argument here.
Because of the continuity established above, triangle edges move continuously with state.
They must therefore reach zero area just prior to being removed from the mesh, and new edges must enter in such a way that they initially form zero-area triangles.
Because our forces result from integrals over these triangular faces, the force contributions of new and deleted triangles are zero, so no discontinuity occurs.
Subsequently they evolve continuously as established above.

\bibliographystyle{abbrv}
\bibliography{paper}
	
\end{document}